\newcommand{\qw}[1][-1]{\ar @{-} [0,#1]}
\newcommand{\qwx}[1][-1]{\ar @{-} [#1,0]}
\newcommand{\control}{*!<0em,.025em>-=-<.2em>{\bullet}}
\newcommand{\ctrl}[1]{\control \qwx[#1] \qw}
\newcommand{\targ}{*+<.02em,.02em>{\xy ="i","i"-<.39em,0em>;"i"+<.39em,0em> **\dir{-}, "i"-<0em,.39em>;"i"+<0em,.39em> **\dir{-},"i"*\xycircle<.4em>{} \endxy} \qw}
\newcommand{\multigate}[2]{*+<1em,.9em>{\hphantom{#2}} \POS [0,0]="i",[0,0].[#1,0]="e",!C *{#2},"e"+UR;"e"+UL **\dir{-};"e"+DL **\dir{-};"e"+DR **\dir{-};"e"+UR **\dir{-},"i" \qw}
\newcommand{\ghost}[1]{*+<1em,.9em>{\hphantom{#1}} \qw}
\newcommand{\Qcircuit}{\xymatrix @*=<0em>}
\date{}
\begin{document}

\title{Strong equivalence of reversible circuits is
  coNP-complete}

\author{Stephen P. Jordan \\ \small{{\em National Institute of Standards and
    Technology.} \texttt{stephen.jordan@nist.gov}}}

\bibliographystyle{unsrt}

\maketitle

\newcommand{\id}{\mathds{1}}
\newtheorem{proposition}{Proposition}
\newtheorem{definition}{Definition}
\newtheorem{corollary}{Corollary}
\newtheorem{lemma}{Lemma}

\begin{abstract}
It is well-known that deciding equivalence of logic circuits is a
coNP-complete problem. As a corollary, the problem of deciding weak
equivalence of reversible circuits, \emph{i.e.} ignoring the ancilla
bits, is also coNP-complete. The complexity of deciding strong
equivalence, including the ancilla bits, is less obvious and may depend
on gate set. Here we use Barrington's theorem to show that deciding
strong equivalence of reversible circuits built from the Fredkin gate
is coNP-complete. This implies coNP-completeness of deciding strong
equivalence for other commonly used universal reversible gate sets,
including any gate set that includes the Toffoli or Fredkin gate.
\end{abstract}

\section{Introduction}

Any Boolean circuit with $n$ bits of input and $m$ bits of output
implements a function $f:\{0,1\}^n \to \{0,1\}^m$. If two circuits
implement the same function we say they are equivalent. The following
is well-known. 

\begin{proposition}
\label{irreversible}
Deciding equivalence of Boolean circuits is coNP-complete.
\end{proposition}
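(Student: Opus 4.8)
The plan is to establish both halves of the coNP-completeness claim: membership in coNP and coNP-hardness.

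First I would show that the problem lies in coNP by exhibiting a short certificate for the complementary problem, namely \emph{inequivalence}. Given two Boolean circuits $C_1$ and $C_2$, each on $n$ input bits, a witness of inequivalence is simply an input $x \in \{0,1\}^n$ on which the two circuits disagree. A verifier can evaluate any Boolean circuit on a fixed input in time polynomial in the circuit size, so it can check $C_1(x) \neq C_2(x)$ efficiently. Hence inequivalence is in NP, and by definition equivalence is in coNP.

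For coNP-hardness I would reduce from TAUTOLOGY, the canonical coNP-complete problem, whose hardness follows from the Cook--Levin theorem applied to the complement of SAT. Given an instance $\phi$, a Boolean formula on $n$ variables, one constructs in polynomial time a single-output circuit $C_\phi$ that computes $\phi$; this is immediate, since a formula is already essentially a tree-shaped circuit. Let $C_\top$ denote the trivial circuit that outputs $1$ on every input. Then $\phi$ is a tautology if and only if $C_\phi(x) = 1 = C_\top(x)$ for all $x$, that is, if and only if $C_\phi$ and $C_\top$ compute the same function and are thus equivalent. This is a polynomial-time many-one reduction from TAUTOLOGY to circuit equivalence, so the latter is coNP-hard.

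The construction is routine rather than delicate, so there is no substantial obstacle; the only points requiring care are orienting the many-one reduction in the correct direction, that is, mapping from the coNP-complete problem into circuit equivalence and not the reverse, and confirming that both the formula-to-circuit translation and the circuit evaluation used by the verifier are genuinely polynomial time. Combining membership in coNP with coNP-hardness yields coNP-completeness.
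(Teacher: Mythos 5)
Your proof is correct and follows the same route as the paper: membership in coNP via a disagreeing input $x$ as an efficiently checkable witness of inequivalence, and coNP-hardness via Cook's theorem by reducing TAUTOLOGY (equivalently, non-tautology being NP-complete) to equivalence with a constant-$1$ circuit. You simply spell out the reduction that the paper leaves implicit in its citation of \cite{Cook}, and the details you supply are sound.
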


Proving containment in coNP is easy; if two Boolean circuits are
inequivalent then there exists an input on which their outputs
differ, which serves as a concise, efficiently-verifiable proof of
inequivalence. The coNP-hardness follows directly from Cook's 1971
result that the logical non-tautology problem is NP-complete
\cite{Cook}. Thus, many sources cite \cite{Cook} as the origin
of proposition \ref{irreversible}.

Proposition \ref{irreversible} has interesting consequences both in
practical circuit design and in mathematical logic. On the practical
side, one may wish to reduce a given logic circuit to a normal form
dependent only on the function $f:\{0,1\}^n \to \{0,1\}^m$ that it
implements. This would achieve indistinguishability obfuscation, as
defined in \cite{BGIRSVY01, GR07}. Furthermore, deciding equivalence of
logic circuits has applications to circuit optimization, and this has
motivated the development of equivalence-checking software, which is
now included in many CAD packages \cite{Scheffer}. However,
proposition \ref{irreversible} implies no algorithm for checking
equivalence or reducing to normal form can have have polynomial
asymptotic runtime unless $\mathrm{P} = \mathrm{NP}$. (P is closed
under complement, so P=coNP implies P=NP.)

On the mathematical logic side, it is known that all equivalences
between logic propositions are generated by a small number of local
rules, such as distributivity and De Morgan's laws. (One way to
prove this is by using the rules to reduce arbitrary propositions to
disjunctive normal form \cite{BM}.) Proposition \ref{irreversible}
implies that for some pairs of equivalent propositions, the shortest
sequence of such local transformations needed to get from one to the
other must be superpolynomially long, under the standard assumption
that $\mathrm{coNP} \neq \mathrm{NP}$. (If a polynomial-length sequence
always existed, the equivalence problem would be contained in
NP. Therefore, by proposition \ref{irreversible}, $\mathrm{coNP}
\subseteq \mathrm{NP}$. $\mathrm{coNP} \subseteq \mathrm{NP}$ implies
coNP=NP, because if a language were contained in NP but not coNP then
its complement would lie in coNP but not NP.)

When considering possible generalizations of proposition
\ref{irreversible} to reversible circuits, the following two natural
definitions of equivalence present themselves.

\begin{definition}
Let $R$ be a reversible circuit on $b$ bits. By initializing the last
$b - n$ input bits to zero and ignoring the last $b - m$ output bits, 
$(R,n,m)$ defines a Boolean function $f_{R,n,m}:\{0,1\}^n \to \{0,1\}^m$.
$(R,n,m)$ is weakly equivalent to $(R',n,m)$ if $f_{R,n,m} =
f_{R',n,m}$.
\end{definition}

\begin{definition}
Let $R$ be a reversible circuit on $b$ bits. $R$ defines a bijection
$f_R:\{0,1\}^b \to \{0,1\}^b$. $R$ is strongly equivalent to $R'$ if
$f_R = f_{R'}$.
\end{definition}

Weak equivalence of reversible circuits is easily seen to be a
coNP-complete problem. This follows from proposition
\ref{irreversible} and the computational universality of reversible
circuits, which was proven in \cite{Fredkin_Toffoli}, building upon
\cite{Bennett}.

It is also clear that the problem of deciding strong
equivalence of reversible circuits is contained in coNP. However, the
question whether strong equivalence of reversible circuits is a
coNP-hard problem is more subtle and may depend on gate-set. Two of
the most standard reversible gates are the Fredkin gate and the
Toffoli gate, described in Figure \ref{FredTof}. The Fredkin gate is
computationally universal by itself, as is the Toffoli gate
\cite{Fredkin_Toffoli}. Our main result is the following.

\begin{proposition}
\label{Fredkinprop}
The problem of deciding strong equivalence between reversible circuits
constructed from the Fredkin gate is coNP-hard.
\end{proposition}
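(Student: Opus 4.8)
The plan is to reduce from the non-satisfiability problem $\mathrm{UNSAT}$ (equivalently $\mathrm{TAUT}$), which is $\mathrm{coNP}$-complete, to strong equivalence of Fredkin circuits. Given a Boolean formula $\varphi$, I would produce a Fredkin circuit $C_\varphi$ and a reference Fredkin circuit $C_0$ on the same number of bits, arranged so that $C_\varphi$ is strongly equivalent to $C_0$ if and only if $\varphi$ is unsatisfiable. Since strong equivalence is already known to lie in $\mathrm{coNP}$, a polynomial-time reduction of this form establishes $\mathrm{coNP}$-hardness, hence $\mathrm{coNP}$-completeness.

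The engine of the construction is Barrington's theorem. First I would balance $\varphi$ so that it has logarithmic depth, placing its evaluation in $\mathrm{NC}^1$, and then invoke Barrington's theorem to obtain a width-$5$ permutation branching program of polynomial length whose ordered product of permutations equals a fixed $5$-cycle $\gamma \in S_5$ when $\varphi(x)=1$ and equals the identity when $\varphi(x)=0$. Read left to right, the program is a sequence of operations on a register holding an element of $S_5$, where each operation is either an unconditional permutation or a permutation applied conditionally on a single input variable. Since $S_5$ is generated by transpositions, each operation decomposes into (conditional or unconditional) transpositions of the five register symbols.

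Next I would realize this register over a block of five ``rail'' wires and implement each transposition with Fredkin gates. A transposition of symbols $a$ and $b$ applied conditionally on input bit $x_j$ is exactly the controlled-SWAP with control $x_j$ and targets the rails for $a$ and $b$, i.e.\ a single Fredkin gate. The unconditional transpositions are the delicate ones, since a Fredkin gate has no unconditional-swap mode. The key idea is to encode the register in a ``one-cold'' fashion, representing symbol $v$ by the five-bit string that is all ones except for a single zero on rail $v$. In this sector at least one other rail always carries a $1$, so an unconditional transposition of rails $a$ and $b$ can be realized as a controlled-SWAP whose control is any third rail $c$: when the hole sits on $a$ or $b$ the guaranteed $1$ on $c$ drives the swap and moves the hole, and when the hole lies elsewhere the swap is inert, exactly reproducing the transposition. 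Because every Fredkin gate conserves Hamming weight, a circuit built from these gadgets preserves the one-cold sector, so on encoded inputs $C_\varphi$ realizes precisely the branching-program permutation and maps the register by $\gamma$ or by the identity according to whether $\varphi(x)=1$ or $0$; in particular it acts as the identity on every encoded input exactly when $\varphi$ is unsatisfiable.

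\textbf{The main obstacle} is that strong equivalence quantifies over \emph{all} $b$-bit inputs, including those outside the one-cold sector, whereas the unconditional gadgets are genuine rail-controlled swaps and therefore act nontrivially in the other Hamming-weight sectors, producing ``garbage'' permutations that could render $C_\varphi$ and $C_0$ inequivalent regardless of $\varphi$. The heart of the proof is to neutralize this: rather than comparing against the bare identity, I would build the reference $C_0$ on the \emph{identical} weight-preserving scaffold of unconditional gadgets, so that $C_\varphi$ and $C_0$ agree automatically on every sector outside the encoding and their strong equivalence collapses to the register-level condition that the branching program outputs the identity for all $x$. Making this cancellation precise---choosing the reference and arranging the shared scaffold so that off-sector behavior coincides while the on-sector behavior still distinguishes satisfiable from unsatisfiable $\varphi$---is where the real work lies, and it is exactly what upgrades the clean ``encoded-input'' argument into a statement about the full bijection demanded by strong equivalence.
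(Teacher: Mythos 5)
Your setup matches the paper's up through Barrington's theorem: both arguments reduce from UNSAT by balancing the CNF to logarithmic depth, invoke Barrington to obtain a width-5 program multiplying to a fixed 5-cycle exactly when the formula is satisfied, and implement $S_5$ through transpositions realized by Fredkin gates. But there is a genuine gap exactly where you flag it: your final paragraph is a plan, not an argument, and it is the crux of the whole theorem. The ``shared scaffold'' reference $C_0$ is never constructed, and the natural instantiations fail. If $C_0$ contains literally the same gate sequence as $C_\varphi$, the two circuits are identical and the reduction says nothing. If instead $C_0$ omits or alters the input-conditioned Fredkin gates so as to force identity behavior on the one-cold sector, it no longer agrees with $C_\varphi$ off-sector: the conditional gates act as genuine controlled swaps on the weight-1, 2, and 3 sectors of the rail block, and their interleaving with the sector-dependent action of your third-rail gadgets does not factor into a common part plus a $\varphi$-independent part (for instance, on the one-hot sector your unconditional gadget is the identity while the conditional gates still act, so the off-sector map depends on both $x$ and the structure of $\varphi$'s program). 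The on-sector and off-sector behaviors are produced by the same gates and cannot be tuned independently, so an additional idea is required; you have correctly located the difficulty but not overcome it.

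The paper's proof supplies that idea, and the contrast is instructive. It does not encode the register at all: it permutes five ancilla wires outright, powers unconditional swaps with a sixth ancilla promised to be 1, and proves the invariant (Lemma \ref{partway}) that this control bit is \emph{never} modified, on any input. The promise is then discharged not by a matched reference circuit but by a commutator: the output of the reduction is the single circuit $F\,A\,F\,A^{-1}$, compared against the identity, where $F$ is a Fredkin gate swapping two register bits controlled by the sixth bit and $A$ is the controlled-$\alpha$ circuit. If the sixth bit is 0, both $F$'s are inert (here the invariant is essential), so $A^{-1}$ cancels $A$'s entirely unspecified action and the circuit is the identity on the whole unpromised sector; if the sixth bit is 1, the circuit applies $(12)\alpha^{f(x)}(12)\alpha^{-f(x)}$ to the register, which is nontrivial exactly when $f(x)=1$ since $(12)(12345)(12)(12345)^{-1}\neq\id$. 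Note that this cancellation does not transfer automatically to your encoding: Hamming-weight sectors are invariant under your gadgets, but your unconditional gadget is \emph{not} inert on the weight-2 and weight-3 sectors, so the analogous commutator need not collapse there. As submitted, correctness of your reduction on inputs outside the one-cold sector --- precisely the content of \emph{strong} equivalence --- remains unproven, and closing it would most plausibly mean restructuring along the paper's lines.
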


\begin{figure}
\[
\begin{array}{cc|cc}
\textrm{Fredkin} & \quad & \quad & \textrm{Toffoli} \\
\hline
 & & & \\
\begin{array}{cc}
\begin{tabular}{c|c}
\textrm{in} & \textrm{out} \\
\hline
000 & 000 \\
001 & 001 \\
010 & 010 \\
011 & 011 \\
100 & 100 \\
101 & 110 \\
110 & 101 \\
111 & 111 \\
\end{tabular}
&
\Qcircuit @C=1em @R=1.5em {
      & \ctrl{1}                     & \qw \\
      & \multigate{1}{\mathrm{SWAP}} & \qw \\
      & \ghost{\mathrm{SWAP}}        & \qw
}
\end{array}
& \quad & \quad &
\begin{array}{cc}
\begin{tabular}{c|c}
\textrm{in} & \textrm{out} \\
\hline
000 & 000 \\
001 & 001 \\
010 & 010 \\
011 & 011 \\
100 & 100 \\
101 & 101 \\
110 & 111 \\
111 & 110
\end{tabular}
&
\Qcircuit @C=1em @R=1.5em {
      & \ctrl{1} & \qw \\
      & \ctrl{1} & \qw \\
      & \targ    & \qw
}
\end{array}
\end{array}
\]
\caption{\label{FredTof} The Fredkin gate swaps the second two bits if
  the first bit is 1. The Toffoli gates flips the last bit if the
  first two bits are both 1. Shown here are the truth tables and
  corresponding circuit diagrams.} 
\end{figure}

\noindent
A Fredkin gate can be constructed from three Toffoli gates. Thus,
proposition \ref{Fredkinprop} immediately yields the following.

\begin{corollary}
\label{Toffoliprop}
The problem of deciding strong equivalence between reversible circuits
constructed from the Toffoli gate is coNP-hard.
\end{corollary}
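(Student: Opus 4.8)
The plan is to reduce the strong-equivalence problem for Fredkin circuits, which is coNP-hard by Proposition \ref{Fredkinprop}, to the strong-equivalence problem for Toffoli circuits. The reduction is a purely local, gate-by-gate substitution, and its correctness rests on the observation already noted above: a single Fredkin gate can be realized exactly by three Toffoli gates acting on the very same three bits, with no ancilla.

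First I would exhibit the decomposition explicitly. Writing $\mathrm{Toffoli}(x,y;z)$ for the Toffoli gate with controls $x,y$ and target $z$, one has
\[
\mathrm{Fredkin}(c;a,b) = \mathrm{Toffoli}(c,a;b)\,\mathrm{Toffoli}(c,b;a)\,\mathrm{Toffoli}(c,a;b).
\]
When $c=0$ every factor acts as the identity, and when $c=1$ the three gates reduce to the three controlled-NOTs that implement a SWAP of $a$ and $b$; this is exactly the action of the Fredkin gate. A direct check against the truth table in Figure \ref{FredTof} confirms the identity on all eight inputs.

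Given this gadget, the reduction is immediate. Let $R$ and $R'$ be Fredkin circuits on $b$ bits. Replace each Fredkin gate in $R$ (respectively $R'$) by the three-Toffoli gadget on the same three bits, obtaining Toffoli circuits $T$ and $T'$, again on $b$ bits. Because the substitution reproduces the permutation computed by each gate exactly and introduces no new bits, $f_T = f_R$ and $f_{T'} = f_{R'}$ as bijections on $\{0,1\}^b$. Hence $R$ is strongly equivalent to $R'$ if and only if $T$ is strongly equivalent to $T'$. The map $(R,R') \mapsto (T,T')$ only triples the number of gates and leaves the bit count unchanged, so it is computable in polynomial time, and it is therefore a polynomial-time many-one reduction that transfers coNP-hardness.

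The step requiring the most care --- though it remains routine --- is ensuring that the gadget uses \emph{no} ancilla and acts on exactly the same three wires, so that the total number of bits $b$ is unchanged. This matters precisely because strong equivalence, unlike weak equivalence, is sensitive to the behavior on every bit: had the decomposition required an extra scratch bit, one would have to argue separately that the scratch bit is returned to its initial value on all inputs before concluding that strong equivalence is preserved. Since the three-Toffoli construction is clean in this sense, coNP-hardness transfers directly from Proposition \ref{Fredkinprop} to the Toffoli case, establishing the corollary.
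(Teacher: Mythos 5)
Your proof is correct and takes essentially the same route as the paper, which derives the corollary from the single observation that a Fredkin gate can be built from three Toffoli gates on the same three wires with no ancilla. Your write-up merely makes explicit the ancilla-free decomposition $\mathrm{Fredkin}(c;a,b) = \mathrm{Toffoli}(c,a;b)\,\mathrm{Toffoli}(c,b;a)\,\mathrm{Toffoli}(c,a;b)$ and the resulting gate-by-gate polynomial-time reduction that the paper leaves implicit.
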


Proposition \ref{Fredkinprop} and corollary \ref{Toffoliprop} can be
viewed as classical analogues to the quantum hardness results
regarding the non-identity problem for quantum circuits \cite{JWB05,
  T10}. Furthermore, our proof uses techniques related to those
used in \cite{JWB05, T10}. 

Just as for conventional irreversible circuits, software packages have
been developed for checking equivalence of reversible circuits,
motivated by applications to circuit optimization 
\cite{Wille, Yamashita}. Proposition \ref{Fredkinprop} implies that for
standard reversible gate sets, such software cannot have polynomial
asymptotic runtime assuming $\mathrm{P} \neq \mathrm{coNP}$
(equivalently, $\mathrm{P} \neq \mathrm{NP}$). Furthermore, all strong
equivalences between Toffoli circuits are generated by repeated
application of a finite set of local equivalence rules
\cite{Iwama}. Thus, corollary \ref{Toffoliprop} implies that for some
pairs of strongly equivalent Toffoli circuits, the number of
applications of the local equivalence rules to get from one to the
other must be superpolynomially long, assuming $\mathrm{NP} \neq
\mathrm{coNP}$.

\section{Proof}

We start by reviewing our two main tools: Cook's theorem and
Barrington's theorem. Recall that a clause is a set of literals and
negated literals joined by OR, and a Conjunctive Normal Form (CNF)
formula is a set of clauses joined by AND. Cook's theorem states that
the problem of deciding satisfiability of CNF formulas is NP-complete
\cite{Cook}. As a corollary of Cook's theorem one has the following.

\begin{corollary}
\label{3unsat}
The problem of deciding unsatisfiability of CNF formulas is
coNP-complete.
\end{corollary}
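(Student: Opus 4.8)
The plan is to derive this directly from Cook's theorem by invoking the general principle that the complement of an NP-complete language is coNP-complete. Let $\mathrm{SAT}$ denote the set of satisfiable CNF formulas and $\mathrm{UNSAT}$ the set of unsatisfiable ones; up to the question of whether a given string encodes a well-formed CNF formula (which is decidable in polynomial time), these are complementary languages.

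First I would establish containment in coNP. By definition a language lies in coNP precisely when its complement lies in NP. Since $\mathrm{UNSAT}$ is the complement of $\mathrm{SAT}$, and $\mathrm{SAT} \in \mathrm{NP}$ by Cook's theorem, we immediately obtain $\mathrm{UNSAT} \in \mathrm{coNP}$.

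Next I would establish coNP-hardness. Let $M$ be an arbitrary language in coNP; then its complement $\bar{M}$ lies in NP. Because $\mathrm{SAT}$ is NP-complete, there is a polynomial-time many-one reduction $r$ from $\bar{M}$ to $\mathrm{SAT}$, so that $x \in \bar{M} \iff r(x) \in \mathrm{SAT}$. The very same map $r$ reduces $M$ to $\mathrm{UNSAT}$, since $x \in M \iff x \notin \bar{M} \iff r(x) \notin \mathrm{SAT} \iff r(x) \in \mathrm{UNSAT}$. As $M$ was arbitrary, $\mathrm{UNSAT}$ is coNP-hard, and combined with the containment it is coNP-complete.

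There is no substantive obstacle here: the entire argument is a formal manipulation of the definitions, and the only point requiring care is that a single polynomial-time many-one reduction witnessing $\bar{M} \le \mathrm{SAT}$ simultaneously witnesses $M \le \mathrm{UNSAT}$ — that is, that many-one reducibility is preserved when one complements both the source and target languages. One should also note in passing that strings failing to encode a valid CNF formula can be sent to a fixed, trivially unsatisfiable formula, so that the identification of $\mathrm{UNSAT}$ with the complement of $\mathrm{SAT}$ is exact.
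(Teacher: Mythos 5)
Your proof is correct and follows exactly the route the paper intends: the paper states this as an immediate corollary of Cook's theorem, and your argument (containment via $\mathrm{UNSAT}$ being the complement of $\mathrm{SAT} \in \mathrm{NP}$, hardness by complementing both sides of a many-one reduction to $\mathrm{SAT}$) is precisely the standard fact that the complement of an NP-complete language is coNP-complete. Your added care about ill-formed encodings is a nice touch but does not change the substance.
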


\noindent
Barrington's theorem quantifies the power of a highly space-limited
model of computation called width-5 branching programs. Barrington's
proof of this theorem relies on the fact that $S_5$ is a non-solvable
group \cite{Barrington}.

\begin{definition}
A length $l$ width-5 branching program taking $n$ bits of input is a
sequence of $l$ triples, each of the form $(i,\alpha,\beta)$, where
$i \in \{1,\ldots,n\}$, and $\alpha, \beta$ are permutations
from $S_5$. The triple is interpreted as an instruction to apply
permutation $\alpha$ if the $i^{\mathrm{th}}$ input bit is zero,
or apply $\beta$ if the $i^{\mathrm{th}}$ input bit is one. The
final permutation obtained by the composition of the $l$ permutations
is the output of the branching program.
\end{definition}

\begin{proposition}{(Barrington's theorem \cite{Barrington})}
\label{Bthm}
Given any depth $d$, fan-in 2 Boolean formula $f$, and any 5-cycle $\alpha \in
S_5$, one can in $\mathrm{poly(4^d)}$ time construct a width-5
branching program of length at most $4^d$ such that the branching
program evaluates to $\alpha$ if $f$ is TRUE and to $\id$
otherwise. 
\end{proposition}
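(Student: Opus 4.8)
The plan is to prove Barrington's theorem by induction on the depth $d$ of the formula $f$, building up branching programs that output a chosen $5$-cycle when $f$ is TRUE and the identity $\id$ otherwise. The crucial algebraic fact I will exploit is that $S_5$ contains two $5$-cycles whose commutator is again a $5$-cycle; equivalently, since $S_5$ is non-solvable, the commutator subgroup is large enough that any desired $5$-cycle can be realized as a commutator $\alpha \beta \alpha^{-1}\beta^{-1}$ of two $5$-cycles. This is exactly the ingredient that lets the AND of two subformulas be simulated by composition.

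Let me set up the induction. Say a branching program $P$ \emph{$\alpha$-computes} $f$ if it evaluates to the $5$-cycle $\alpha$ when $f$ is TRUE and to $\id$ when $f$ is FALSE. The base case is a formula consisting of a single literal $x_i$ (or its negation): this is handled by a length-$1$ program with the triple $(i,\id,\alpha)$, which outputs $\alpha$ exactly when $x_i=1$. For the inductive step I treat the two gate types. \textbf{Negation:} if $P$ $\alpha$-computes $f$, I want to $\alpha$-compute $\neg f$. The trick is to postcompose with a constant relabeling so that the roles of $\id$ and $\alpha$ are swapped; concretely, modify $P$ to $\alpha^{-1}$-compute $f$ (by conjugating the permutations), then multiply the whole output on the right by $\alpha$, so TRUE yields $\id$ and FALSE yields $\alpha$. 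This at most doubles the length but more cheaply can be absorbed into relabeling, so the length is essentially unchanged.

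\textbf{Conjunction} is the heart of the argument. Suppose $f = g \wedge h$, and choose two $5$-cycles $\sigma,\tau$ whose commutator $\sigma\tau\sigma^{-1}\tau^{-1}=\alpha$. By induction I obtain a program $P_g$ that $\sigma$-computes $g$ and a program $P_h$ that $\tau$-computes $h$. Consider the concatenation that runs $P_g$, then $P_h$, then a program $\sigma$-computing $g$ via $\sigma^{-1}$ (i.e.\ $P_g$ with outputs inverted), then $P_h$ inverted. If both $g$ and $h$ are TRUE the composition is $\sigma\tau\sigma^{-1}\tau^{-1}=\alpha$; if either is FALSE, one of the factors collapses to $\id$ and the surviving pairs cancel, giving $\id$ overall. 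Since Barrington's theorem lets $\alpha$ be \emph{any} $5$-cycle, I can freely demand that the sub-programs target the particular cycles $\sigma,\tau$ I need, which is what makes the recursion go through.

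Finally I track the length. Each conjunction roughly quadruples the length (four copies of depth-$(d-1)$ programs), and negation is free, so a depth-$d$ formula yields length at most $4^d$; the construction is syntactic and runs in $\mathrm{poly}(4^d)$ time. \textbf{The main obstacle} I expect is the conjunction step: making the commutator cancellation work requires that $S_5$ actually contain a pair of $5$-cycles with a $5$-cycle commutator, and requires closure of $5$-cycles under conjugation so that I can always rename one target cycle into another during the recursion. Verifying these group-theoretic facts about $S_5$ — genuinely using its non-solvability rather than just its non-commutativity — is the delicate part; a solvable group would let the nested commutators collapse to $\id$ and the whole scheme would fail.
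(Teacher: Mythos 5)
The paper does not actually prove this proposition: it is imported as a known result from \cite{Barrington}, with the surrounding prose only remarking that the proof hinges on the non-solvability of $S_5$. Your argument is, in substance, Barrington's original proof — the induction on formula depth, free negations, and the commutator trick for conjunction — so there is no divergence to compare against the paper, and the overall structure is correct. Two points need tightening before it is a complete proof. First, ``the two gate types'' should be three: a fan-in-2 formula may contain OR gates, which you can handle via De Morgan, $g \vee h = \neg(\neg g \wedge \neg h)$; since your negation step is length-free (fold the constant multiplication by $\alpha$ into the two permutations of the final triple, and retarget by conjugating every instruction, which telescopes), OR also costs only a quadrupling and the $4^d$ bound survives. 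Second, the group-theoretic fact you defer at the end must actually be exhibited, and non-solvability alone does not hand it to you in the form you state (``the commutator subgroup is large enough that any desired 5-cycle is a commutator of two 5-cycles'' is a conclusion, not something solvability-talk directly yields). What the proof needs is one concrete witness: a pair of 5-cycles whose commutator is again a 5-cycle, e.g.\ $\sigma = (12345)$ and $\tau = (13542)$, for which $\sigma\tau\sigma^{-1}\tau^{-1}$ is a 5-cycle (it equals $(13254)$ under left-to-right composition). Once you have this single pair, the conjugacy of all 5-cycles in $S_5$ — conjugation preserves cycle type, and any 5-cycle maps to any other under a suitable conjugation — lets you retarget the sub-programs to any prescribed $\sigma$, $\tau$, and hence hit any prescribed $\alpha$ at every level of the recursion. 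With that witness supplied, the accounting closes exactly as you say: base case length $1 = 4^0$, negation length-preserving, AND and OR each at most quadruple, giving length at most $4^d$, and the construction is a purely syntactic recursion running in $\mathrm{poly}(4^d)$ time. Your closing intuition is also right in spirit: in a solvable group the nested commutators of sufficiently deep formulas would collapse to $\id$, so some non-solvable group is genuinely required, though the proof itself consumes only the finite check above rather than non-solvability as an abstract hypothesis.
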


\noindent
To apply Barrington's theorem toward proving proposition
\ref{Fredkinprop}, we first prove the following lemma.

\begin{lemma}
\label{partway}
Let $P$ be a length-$l$ width-5 branching program on $n$ input
bits. Given $P$, one can construct a circuit of $O(l)$ Fredkin gates
acting on $n + 6$ bits that permutes five ancilla bits according to
the output of $P$, provided the sixth ancilla bit is initialized to
one. Furthermore, the value of the ancilla bit is always left
unmodified by this circuit.
\end{lemma}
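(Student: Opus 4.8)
The plan is to compile the branching program instruction-by-instruction into Fredkin gates, translating each $S_5$ permutation into a short product of controlled swaps of the five ancilla wires. The five ancilla wires play the role of the five symbols permuted by $S_5$, and the sixth ancilla, fixed to $1$, serves as a control line that lets us realize \emph{unconditional} permutations using the same controlled-swap primitive.

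The basic primitive is the observation that the Fredkin gate is a controlled \textsc{swap}: conditioned on its control wire being $1$, it exchanges its two target wires. Hence a single Fredkin gate with targets on ancilla wires $j$ and $k$ realizes the transposition $(j\,k)$ acting on the five ancilla bits, conditioned on the control being $1$. Since every element of $S_5$ is a product of at most four transpositions, any fixed $g \in S_5$ can be written as $g = \tau_1 \cdots \tau_r$ with $r \le 4$ and applied by $r$ Fredkin gates sharing a common control line. Taking that control to be the input bit $i$ applies $g$ exactly when bit $i$ equals $1$; taking it to be the constant-$1$ ancilla applies $g$ unconditionally, since the swaps always fire when the control is always $1$.

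The one obstacle is that a Fredkin gate offers only a positive control, whereas each instruction $(i,\alpha,\beta)$ also requires applying $\alpha$ when bit $i$ equals $0$. To avoid a negated control, I rewrite the instruction as: apply $\alpha$ unconditionally, then apply $\gamma := \beta\alpha^{-1}$ conditioned on bit $i = 1$. Composing left to right, when bit $i = 0$ the net permutation is $\alpha$, and when bit $i = 1$ it is $\gamma \circ \alpha = \beta\alpha^{-1}\alpha = \beta$, as required. The unconditional $\alpha$ is realized by $\le 4$ Fredkin gates controlled on the constant-$1$ ancilla, and the conditional $\gamma$ by $\le 4$ Fredkin gates controlled on bit $i$, giving at most $8$ Fredkin gates per instruction and $O(l)$ gates in total.

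Concatenating these blocks in the order of the instructions applies the per-step permutations $p_1, p_2, \ldots, p_l$ in sequence, so the net permutation of the five ancilla bits is $p_l \circ \cdots \circ p_1$, which is exactly the output of $P$. Finally, throughout the construction the sixth ancilla appears only as a control wire and never as a swap target, so its value is never changed; and the unconditional swaps fire correctly precisely because it is initialized to $1$, which is what the hypothesis supplies.
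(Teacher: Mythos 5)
Your proof is correct, and it uses the same core machinery as the paper's: Fredkin gates as controlled transpositions on the five ancilla wires, the constant-$1$ sixth ancilla supplying unconditional swaps, and the observation that the ancilla is only ever a control wire and hence left unmodified on \emph{all} inputs (which, as you rightly emphasize, is exactly what the later case analysis in the main proof requires). Where you diverge is in handling the zero-branch of an instruction $(i,\alpha,\beta)$. The paper builds a negatively controlled swap physically: a Fredkin gate followed by an unconditional SWAP (itself an ancilla-controlled Fredkin) swaps the targets precisely when bit $i$ is $0$, and stringing these together realizes the triple $(i,\alpha,\id)$, which is then composed with $(i,\id,\beta)$ to obtain $(i,\alpha,\beta)$. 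You instead eliminate negative controls algebraically, rewriting the instruction as an unconditional $\alpha$ followed by a positively controlled $\gamma = \beta\alpha^{-1}$, and checking $\gamma \circ \alpha = \beta$ on the one-branch. The two decompositions cost the same asymptotically ($O(1)$ gates per instruction, hence $O(l)$ overall, and both rely on the constant ancilla for the unconditional part), so nothing is lost either way; your version is slightly leaner in that it needs no negative-control gadget at all, while the paper's version simulates each branch of the triple directly, keeping the correspondence with the branching-program semantics more literal. Your explicit bound of at most four transpositions per element of $S_5$ and your care with the left-to-right composition convention are both correct and tighten up details the paper leaves implicit.
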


\begin{proof}
The transpositions (``swaps'') generate $S_5$ (or any symmetric
group). Thus, for any triple of the form $(i, \id, \beta)$ we can
construct a  corresponding sequence of $O(1)$ Fredkin gates controlled
by $i$ such that the ancillary bits are permuted according to
$\beta$ if the $i^{\mathrm{th}}$ input bit is one and are left
untouched otherwise. We can use the ancilla initialized to one as the
control bit of a Fredkin gate, thereby simulating a SWAP gate. A
Fredkin gate followed by a SWAP gate on the target bits swaps the
target bits if and only if the control bit is zero. This in turn
allows us to implement triples of the form $(i,\alpha,\id)$. By
composing the triples $(i, \id,\beta)$ and $(i, \alpha, \id)$ one then
obtains an arbitrary triple $(i, \alpha, \beta)$. By simulating the
full sequence of $l$ such triples, one simulates the full branching
program. Because each Fredkin gate involving the ancilla bit uses the
ancilla only as a control bit, its value is left untouched for all
possible inputs.
\end{proof}

\noindent
We now prove our main result.

\begin{proof}{(of Proposition \ref{Fredkinprop})}
Polynomial size CNF formulas can always be expressed as logarithmic
depth circuits. Thus, by proposition \ref{Bthm} and corollary
\ref{3unsat}, the problem of deciding whether a given width-5 branching
program always evaluates to the identity is coNP-hard. Lemma
\ref{partway} gets us part of the way toward using this fact to prove
that deciding whether a Fredkin circuit is equivalent to the identity
is coNP-hard. However, lemma \ref{partway} assumes the presence of an
ancilla bit initialized to one, whereas strong equivalence means
equivalence on all inputs. We can work around this problem by
simulating the presence of an ancilla bit initialized to one using the
following reversible circuit. 
\vspace{11pt}
\[
\Qcircuit @C=1em @R=.5em {
 & \qw     & \multigate{1}{\mathrm{SWAP}} & \multigate{4}{\alpha}  & \multigate{1}{\mathrm{SWAP}} & \multigate{4}{\alpha^{-1}}  & \qw \\
 & \qw     & \ghost{\mathrm{SWAP}}        & \ghost{\alpha}         & \ghost{\mathrm{SWAP}}        & \ghost{\alpha^{-1}}         & \qw \\
 & \qw     & \qw                          & \ghost{\alpha}         & \qw                          & \ghost{\alpha^{-1}}         & \qw \\
 & \qw     & \qw                          & \ghost{\alpha}         & \qw                          & \ghost{\alpha^{-1}}         & \qw \\
 & \qw     & \qw                          & \ghost{\alpha} \qwx[1] & \qw                          & \ghost{\alpha^{-1}} \qwx[1] & \qw \\
 & {/} \qw & \qw                          & \multigate{1}{f}       & \qw                          & \multigate{1}{f}           & \qw \\
 & \qw     & \ctrl{-5}                    & \ghost{f}              & \ctrl{-5}                    & \ghost{f}                  & \qw
}
\]
Here the slash is a shorthand for $n$ bits. The first (leftmost)
Fredkin gate swaps the top two bits if the bottom bit is initialized
to one. Next, the linked boxes labeled $\alpha$ and $f$ are a
shorthand for the circuit constructed in lemma \ref{partway}. That is,
they are a sequence of Fredkin gates, which, under the assumption
that the bottom bit is 1, apply the 5-cycle $\alpha = (12345)$ to bits
one through five if the CNF formula $f$ evaluates TRUE, and the
identity permutation otherwise. This action is followed by a second
Fredkin gate swapping the top two bits controlled by the bottom bit,
and lastly the inverse of the controlled-$\alpha$ circuit. (Because
Fredkin gates are self-inverse, one obtains this inverse simply by
reversing the order of the gates.)

We now analyze case-by-case to show that the circuit described above
is strongly equivalent to the identity if and only if $f$ is
unsatisfiable. If the bottom bit is initialized to one and $f$
evaluates to true, then the top five bits are permuted nontrivially
because $(12)(12345)(12)(12345)^{-1} \neq \id$. If the bottom bit is 
initialized to one and $f$ evaluates to false, then the circuit acts
as the identity, because the two SWAPs of the top two bits cancel.

Lastly, suppose the bottom bit is initialized to zero. In this case,
the controlled-$\alpha$ circuit behaves in a way that we have not
explicitly described. However, we do know from lemma \ref{partway}
that the value of the bottom bit is left unmodified by the
controlled-$\alpha$ circuit. This is all we need to know; it 
implies that neither of the two controlled-SWAPs act, and therefore,
the (unspecified) action of the controlled-$\alpha$ circuit is
cancelled by the action of its inverse circuit.
\end{proof}

\noindent
\textbf{Acknowledgments:} I thank Scott Aaronson, Gorjan Alagic,
Stacey Jeffery, Vincent Liew, and Yi-Kai Liu for helpful
discussions. This paper is a contribution of the National Institute of
Standards and Technology and is not subject to U.S. copyright.

\bibliography{conp}

\end{document}